\newcommand{\longversion}[1]{#1}
\newcommand{\shortversion}[1]{}
\algnewcommand{\Initialize}[1]{%
	\State \textbf{Initialize:}
	\Statex \hspace*{\algorithmicindent}\parbox[t]{.8\linewidth}{\raggedright #1}
}
\algnewcommand{\LineComment}[1]{\State \(\triangleright\) #1}
\def \nobreakseq {\nobreak \hskip 0pt \hbox}
\newdimen\prevdp
\def\leftlabel#1{\noalign{\prevdp=\prevdepth
		\kern-\prevdp\nointerlineskip\vbox to0pt{\vss\hbox{\ensuremath{#1}}}\kern\prevdp}}
\newcommand{\Pshort}{\ensuremath{\mathsf{P}}\xspace}
\newcommand{\NPC}{\ensuremath{\mathsf{NP}}\text{-complete}\xspace}
\newcommand{\NPH}{\ensuremath{\mathsf{NP}}\text{-hard}\xspace}
\newcommand{\el}{\ensuremath{\ell}\xspace}
\newcommand{\suc}{\ensuremath{\succ}\xspace}
\newcommand{\BigO}{\ensuremath{\mathcal{O}}\xspace}
\newcommand{\true}{\textsc{true}\xspace}
\newcommand{\false}{\textsc{false}\xspace}
\newcommand{\pr}{\ensuremath{\prime}\xspace}
\newcommand{\prr}{{\ensuremath{\prime\prime}}\xspace}
\newcommand{\PE}{\textsc{Preference Elicitation}\xspace}
\newcommand{\SCTR}{\textsc{Weakly Single Crossing Tree Recognition}\xspace}
\newcommand{\SCSR}{\textsc{Weakly Single Crossing Tree Recognition}\xspace}
\let\oldlambda\lambda
\renewcommand{\lambda}{\ensuremath{\oldlambda}\xspace}
\let\oldalpha\alpha
\renewcommand{\alpha}{\ensuremath{\oldalpha}\xspace}
\let\oldDelta\Delta
\renewcommand{\Delta}{\ensuremath{\oldDelta}\xspace}
\newcommand{\YES}{{\sc yes}\xspace}
\newcommand{\NO}{{\sc no}\xspace}
\newcommand{\Query}{\textsc{Query}\xspace}
\newcommand{\maj}{\ensuremath{\mathtt{Maj}}}
\renewcommand{\AA}{\ensuremath{\mathcal A}\xspace}
\newcommand{\CC}{\ensuremath{\mathcal C}\xspace}
\newcommand{\DD}{\ensuremath{\mathcal D}\xspace}
\newcommand{\EE}{\ensuremath{\mathcal E}\xspace}
\newcommand{\GG}{\ensuremath{\mathcal G}\xspace}
\newcommand{\LL}{\ensuremath{\mathcal L}\xspace}
\newcommand{\MM}{\ensuremath{\mathcal M}\xspace}
\newcommand{\OO}{\ensuremath{\mathcal O}\xspace}
\newcommand{\PP}{\ensuremath{\mathcal P}\xspace}
\newcommand{\QQ}{\ensuremath{\mathcal Q}\xspace}
\newcommand{\RR}{\ensuremath{\mathcal R}\xspace}
\newcommand{\TT}{\ensuremath{\mathcal T}\xspace}
\newcommand{\VV}{\ensuremath{\mathcal V}\xspace}
\newcommand{\WW}{\ensuremath{\mathcal W}\xspace}
\newcommand{\XX}{\ensuremath{\mathcal X}\xspace}
\newcommand{\PPP}{\ensuremath{\mathfrak P}\xspace}
\renewcommand{\lll}{\ensuremath{\mathfrak l}\xspace}
\newcommand{\ppp}{\ensuremath{\mathfrak p}\xspace}
\newcommand{\nfrac}{\nicefrac}
\newtheorem{proposition}{\bf Proposition}
\newtheorem{observation}{\bf Observation}
\newtheorem{theorem}{\bf Theorem}
\newtheorem{lemma}{\bf Lemma}
\newtheorem{corollary}{\bf Corollary}
\newtheorem{definition}{\bf Definition}
\newtheorem{problemdefinition}{\bf Problem Definition}
\newtheorem{property}{\bf Property}
\newcommand{\eps}{\ensuremath{\varepsilon}\xspace}
\renewcommand{\epsilon}{\eps}
\newcommand{\ignore}[1]{}
\renewcommand{\ge}{\geqslant}
\renewcommand{\le}{\leqslant}
\crefname{theorem}{Theorem}{Theorems}
\crefname{observation}{Observation}{Observations}
\crefname{lemma}{Lemma}{Lemmata}
\crefname{corollary}{Corollary}{Corollaries}
\crefname{proposition}{Proposition}{Propositions}
\crefname{definition}{Definition}{Definitions}
\crefname{claim}{Claim}{Claims}
\crefname{reductionrule}{Reduction rule}{Reduction rules}
\crefname{problemdefinition}{Problem Definition}{Problem Definition}
\crefname{property}{Property}{Properties}
\title{Recognizing and Eliciting Weakly Single Crossing Profiles on Trees}
\author{
    Palash Dey\\Indian Institute of Technology, Kharagpur\\palash.dey@cse.iitkgp.ac.in
}
\begin{document}

\maketitle

\begin{abstract}
	Single crossing profiles on trees are not downward closed --- a sub-profile of a single crossing profile on trees is not necessarily a single crossing profile on trees. We define weakly single-crossing profiles on trees to be all single crossing profiles on trees and their sub-profiles thereby restoring downward closedness. We design a polynomial-time algorithm for recognizing these profiles. We then develop an efficient elicitation algorithm for this domain which works even if the preferences can be accessed only sequentially and the underlying single-crossing tree structure is not known beforehand. We complement our algorithmic results by proving a matching lower bound on the query complexity of our elicitation algorithm when the number of voters is large compared to the number of candidates. We also prove a lower bound of $\Omega(m^2\log n)$ on the number of queries that any algorithm needs to ask to elicit single-crossing profile when random queries are allowed. This resolves an open question in~\cite{deycross} and proves optimality of their preference elicitation algorithm when random queries are allowed.
\end{abstract}

\section{Introduction}

Aggregating preferences of a set of agents is a common problem in multiagent systems. In a typical setting, we have a set of $m$ {\em candidates}, a set of $n$ {\em voters} each of whom possesses a {\em preference} which is a linear order (reflexive, anti-symmetric, and transitive relation) over the set of candidates, and we would like to aggregate these preferences into one preference which intuitively ``reflects'' the preference of all the voters. However, it is well-known that the pairwise majority relation of a set of preferences may often be intransitive due to existence of the {\em Condorcet cycles} --- a set of candidates $\{c_i: i\in\{1,\ldots,\el\}\}$ where $c_i$ is preferred over $c_{i + 1}$ by a majority of the voters for every $i\in\{1,\ldots,\el-1\}$ and $c_\el$ is preferred over $c_1$ (see \cite{moulin1991axioms}). Consecutively, a substantial amount of research effort has been devoted to finding interesting restrictions on the preferences of the voters which ensure transitivity of the majority relation (see \cite{merlin2004domain} for a survey).

Among the most widely used domains in social choice theory are the {\em single-peaked} and {\em single-crossing} domains. Introduced by Black~\cite{black1948rationale}, the single-peaked domain not only satisfies transitivity of the majority relation but also captures the essence of many election scenarios including political elections~\cite{hinich1997analytical}. Intuitively, a {\em profile}, which is a tuple of preferences of all the voters, is called single-peaked if the candidates can be arranged in a linear order (often called societal axis or harmonious order) and every voter prefers candidates ``closer'' to her most preferred candidate than candidates ``far'' from her most preferred candidate in the societal axis~[refer to~\cite{mas1995microeconomic} for a formal definition of single-peaked preference profile]. The notion of single-peakedness has subsequently been generalized further, often at the cost of the transitivity of the majority relation. For example, the popular single-peaked domain on trees~\cite{demange1982single} only guarantees existence of a {\em Condorcet winner} (for an odd number of voters) and the notion of single-peaked width~\cite{CornazGS12} does not even guarantee existence of a Condorcet winner. A Condorcet winner is a candidate who is preferred over every other candidate by a majority of the voters.

Mirrlees~\cite{mirrlees1971exploration} proposed the single-crossing domain where voters (instead of candidates as in the case of single-peaked domain) can be arranged in a linear order so that, for every two candidates $x$ and $y$, all the voters who prefer $x$ over $y$ appear consecutively. Other than guaranteeing transitivity property of the majority relation, the single-crossing domain has found wide applications in economics~\cite{diamond1974increases}, specially in redistributive income taxation~\cite{roberts1977voting,meltzer1981rational}, trade union bargaining~\cite{blair1984labor}, etc. Single-crossing property has further been generalized with respect to {\em median graphs} while maintaining the transitivity of the majority relation~\cite{demange2012majority}. A graph is called a median graph if, for every three nodes, there exists a unique node which is present in the shortest paths between all three pairs of nodes. Trees, hypercubes, etc. are important examples of median graphs. A profile \PP is called single crossing on trees if there exists a tree \TT on \PP such that every sub-profile along any path in \PP is single crossing with respect the induced path order. Single-crossing profiles also have many real-world applications. For example, Kung applied single-crossing property on trees to the study of networks~\cite{kung2015sorting}. Clearwater et al.~\cite{clearwater2015single} show a polynomial-time algorithm to recognize whether or not a given profile is intermediate with respect to some median graph.

\subsection{Motivation and Contribution}

A domain is called ``downward closed'' --- if every sub-profile (with respect to the voters, throughout this paper) of any profile in the domain, also belongs to the domain. Downward closedness is a desirable property of any domain since some voters often do not turn up and we have to work with a sub-profile of the profile of population. Popular domains like single-peaked, single-crossing, etc. satisfy this property. Notable exceptions are top monotonic profiles~\cite{barbera2011top}, single-crossing profiles on trees, etc. To see that the domain of single-crossing profiles on trees is not downward closed, we refer to \Cref{fig:ex}.
\begin{figure}
	\centering
	\begin{tikzpicture}[sibling distance=9em,
		every node/.style = {shape=rectangle, rounded corners,
			draw, align=center}]]
		\node {$v_1: a\suc b\suc c\suc d\suc e\suc f$}
		child { node {$v_2: b\suc a\suc c$\\$\suc d\suc e\suc f$} }
		child { node {$v_3: a\suc b\suc d$\\$\suc c\suc e\suc f$} }
		child { node {$v_4: a\suc b\suc c$\\$\suc d\suc f\suc e$}
		};

	\end{tikzpicture}
	\caption{Profile $(v_1, v_2, v_3, v_4)$ is single-crossing on trees. However, $(v_2,v_3,v_4)$ is not single-crossing with respect to any tree. We call profiles like $(v_2,v_3,v_4)$ where some preferences can be added to make it single-crossing on trees, weakly single-crossing profile on trees.}\label{fig:ex}
\end{figure}
We remove this drawback of single-crossing profiles on trees by defining {\em weakly single-crossing profiles on trees}. We call a profile weakly single-crossing on trees if it is a sub-profile of a single-crossing profile on trees. Interestingly, we show that weakly single-crossing profile on trees inherit important properties like transitivity of the majority relation for an odd number of voters [\Cref{prop:tran}] and thus many otherwise intractable voting rules, for example Kemeny, Dodgson, etc., can be computed in polynomial amount of time. We then show in \Cref{cor:recognizing} that weakly single-crossing profiles can also be recognized in polynomial amount of time.

Another important question about a domain is if the profiles can be elicited by asking a few queries. Indeed, in many applications of social choice theory, for example, meta-search engines~\cite{Dwork}, spam detection~\cite{Cohen}, and computational biology~\cite{JacksonSA08}, the number of candidates is huge and thus, simply asking the agents to reveal their preferences is impractical. \cite{Conitzer09} showed that, for the domain of single-peaked profiles, we can elicit the preferences of a set of agents by asking a small number of comparison queries. Recently, a similar study has been carried out for the single-peaked domain on trees~\cite{deypeak} and the single-crossing domain~\cite{deycross}. This motivates us to study preference elicitation for the weakly single-crossing domain on trees. We present a polynomial time algorithm with query complexity $\OO(mn + \min\{m^2, n\} m\log m)$ for eliciting weakly single-crossing profiles on trees, even if we do not a priori know any tree with respect to which a super-profile of the input profile is single-crossing on trees and we are only allowed to access the preferences in an arbitrary (unknown) sequential order [\Cref{thm:scsequnknownub,cor:elicitsubdomain}]. We note that the query complexity of our preference elicitation algorithm matches (up to constant factors) with the query complexity of preference elicitation algorithm for single-crossing profiles due to \cite{deycross}. We complement the query complexity of our preference elicitation algorithm by showing that any preference elicitation algorithm for single-crossing profiles on trees has query complexity $\Omega(\max\{m\log m, mn\})$, even if the input profile is single-crossing with respect to a known tree and we are allowed to query preferences randomly and interleave the queries to different preferences arbitrarily [\Cref{thm:scrandomknownlb}].

We finally improve the $\Omega(m\log n)$ query complexity lower bound in~\cite{deycross} for eliciting single-crossing profiles when random queries are permitted to $\Omega(m^2\log n)$ thereby showing optimality of their algorithm [\Cref{thm:lb_sc}]. In summary, our work shows that weakly single-crossing profiles on trees have all the desirable properties of single-crossing profiles and moreover, it is downward closed.

\section{Related Work}

Popular domains often enjoy many desirable properties. Existence of strategy-proof voting rules, polynomial time tractability of the problem of finding the winner for important voting rules like Kemeny~\cite{kemeny1959mathematics,levenglick1975fair,brandt2015bypassing,DBLP:journals/iandc/FaliszewskiHHR11}, Dodgson~\cite{dodgson1876method,black1958theory}, Chamberlin-Courant~\cite{betzler2013computation,skowron2013complexity,clearwater2015single}, etc. are prominent examples of such properties. Domain restrictions in the presence of incomplete votes (where votes may be partial orders instead of complete orders) have also received significant research attention in recent times. Lackner~\cite{Lackner14} showed that determining whether a set of incomplete votes is single-peaked is an \NPC problem. Elkind et al.~\cite{DBLP:conf/aaai/ElkindFLO15} carried out a similar study for the single-crossing domain and showed that although the problem of finding whether a set of incomplete votes is single-crossing is \NPC in general, it admits polynomial time algorithms for top orders and for various other practically appealing cases. In an approval voting scenario, every voter simply approves a subset of candidates instead of specifying a complete ranking of the candidates. Closeness of a profile to various domains also forms an active body of current research. The work of \cite{DBLP:conf/aaai/ErdelyiLP13} and \cite{DBLP:journals/mss/BredereckCW16} showed that the problem of finding the distance of a profile from the single-peaked and single-crossing domains is often \NPH and occasionally polynomial time solvable under various natural notions of distance. \cite{DBLP:conf/aaai/ElkindL14} provided approximation and fixed parameter tractable algorithms for the problems of finding the minimum number of votes/candidates that need to be deleted so that the resulting profile belongs to some specific domain --- their algorithms work for any domain which can be characterized by forbidden configurations and thus, in particular, for both the single-peaked and single-crossing domains. \cite{Faliszewski:2014:CMA:2566265.2566291} studied various manipulation, control, and bribery problems in nearly single-peaked domain and showed many interesting results; for example, they proved that some control problems suddenly become \NPH even in the presence of only one maverick (whose vote does not satisfy the single-peaked property) whereas many other manipulation problems continue to be polynomial time solvable with a reasonable number of mavericks. We refer the reader to the recent survey article by \cite{DBLP:journals/corr/abs-2205-09092} and references therein for a more complete picture of recent research activity in preference restrictions in computational social choice theory.

\section{Preliminaries}

For a positive integer $\el$, we denote the set $\{1, \ldots, \el\}$ by $[\el]$. For a set \XX and an integer $k$, we denote the set of all possible subsets of \XX of size $k$ by $\PPP_k(\XX)$.

Let $\VV = \{v_i: i\in[n]\}$ be a set of $n$ {\em voters} and $\CC = \{c_j: j\in[m]\}$ be a set of $m$ {\em candidates}. If not mentioned otherwise, we denote the set of candidates, the set of voters, the number of candidates, and the number of voters by \CC, \VV, $m$, and $n$ respectively. Every voter $v_i$ has a {\em preference} $\suc_i$ which is a linear order over the set \CC of candidates. We say voter $v_i$ prefers a candidate $x\in\CC$ over another candidate $y\in\CC$ if $x\suc_i y$. We denote the set of all preferences over \CC by $\LL(\CC)$. For a preference ${\suc}\in\LL(\CC)$ and an integer \el, we denote \el copies of $\suc$ by $\suc^\el$. The $n$-tuple $(\suc_i)_{i\in[n]} \in\LL(\CC)^n$ of the preferences of all the voters is called a {\em profile}. For a subset $M\subseteq[n]$, we call $(\suc_i)_{i\in M}$ a sub-profile of $(\suc_i)_{i\in[n]}$. We often view a profile \PP as a multi-set consisting of the preferences in \PP to avoid use of cumbersome notations. The view of a profile we are considering will be clear from the context. The single-crossing domain is defined as follows.

\begin{definition}[Single Crossing Profile]
	A profile $\PP = (\succ_i)_{i\in[n]}$ of $n$ preferences over a set \CC of candidates is called a {\em single-crossing profile} if there exists a permutation $\sigma$ of $[n]$ such that, for every two distinct candidates $x, y\in\CC$, whenever we have $x\succ_{\sigma(i)} y$ and $x\succ_{\sigma(j)} y$ for two integers $i$ and $j$ with $1\le \sigma(i)< \sigma(j)\le n$, we have $x\succ_{\sigma(k)} y$ for every $\sigma(i)\le \sigma(k)\le \sigma(j)$.
\end{definition}

Demange generalizes the single-crossing domain to the single-crossing domain on median graphs in \cite{demange2012majority}. A graph $\GG=(\VV,\EE)$ is called a {\em median graph} if, for every three nodes $u, v, w\in\VV$, there exists a unique vertex $m(u, v, w)$, called the median of the nodes $u, v,$ and $w$, which is present in a shortest path between $u$ and $v$, $v$ and $w$, and $w$ and $u$. Trees and hypercubes are important examples of median graphs. In this work, we will be concerned with trees only. A {\em tree} is a connected acyclic graph. A {\em star} is a tree where there is a {\em central node} with whom every other node is connected by an edge. We refer to \cite{diestelBook} for common terminologies of trees. Given a tree \TT and a node $u\in\TT$, we denote the tree rooted at $u$ by $\TT[u]$. The single-crossing domain on trees is defined as follows.

\begin{definition}[Single Crossing Profile on Trees]\label{def:singlecrossingtree}
	A profile \PP is called {\em single-crossing on trees} if there exists a tree \TT on the set of voters such that every sub-profile along every path of \TT is single-crossing with respect to that path.
\end{definition}

Let a profile \PP be single-crossing with respect to a tree \TT. We call \TT a {\em single-crossing tree} of \PP. We denote the preference associated with a voter (which is a node in \TT) $u\in\TT$ by $\suc_u$. We denote a voter in \TT associated with a preference ${\suc}\in\PP$ by $u_\suc$. An equivalent condition for a profile to be single-crossing with respect to a tree \TT is that, for every pair of candidates $x, y\in\CC$, there exists at most one edge in the cut $(\VV_{x\suc y}, \VV_{y\suc x})$, where $\VV_{x\suc y}$ is the set of voters who prefer $x$ over $y$~\cite{clearwater2015single}. We now {\em generalize} the single-crossing domain on trees to the weakly single-crossing domain on trees as follows.

\begin{definition}[Weakly Single Crossing Profile on Trees]
	The weakly single-crossing domain on trees is the set of all profiles \PP such that there exists a profile $\PP^\pr$ which contains \PP and is itself single-crossing with respect to a tree \TT.
\end{definition}

\subsection{Problem Formulation} We study the following problem for recognizing weakly single-crossing profiles on trees.

\begin{problemdefinition}{\bf(\SCTR)}
	Given a profile \PP, does \PP belong to the weakly single-crossing domain on trees?
\end{problemdefinition}

Suppose we have a profile \PP with $n$ voters and $m$ candidates. Let us define a function $\text{\Query}(x \succ_\el y)$ for a voter \el and two different candidates $x$ and $y$ to be \true if the voter \el prefers the candidate $x$ over the candidate $y$ and \false otherwise. We now define the preference elicitation problem.

\begin{problemdefinition}[\PE]
	Given an oracle access to $\text{\Query}(\cdot)$ for a profile \PP, find \PP.
\end{problemdefinition}

For two distinct candidates $x, y\in \CC$ and a voter \el, we say a \PE algorithm \AA ~{\em compares} candidates $x$ and $y$ for the voter \el, if \AA makes a call to either $\text{\Query}(x \succ_\el y)$ or $\text{\Query}(y \succ_\el x)$. We define the number of queries made by the algorithm \AA, called the {\em query complexity} of \AA, to be the number of distinct tuples $(\el, x, y)\in \VV\times\CC\times\CC$ with $x\ne y$ such that the algorithm \AA compares the candidates $x$ and $y$ for the voter \el.\longversion{ Notice that, even if the algorithm \AA makes multiple calls to \Query($\cdot$) with same tuple $(\el, x, y)$, we count it only once in the query complexity of \AA. This is without loss of generality since we can always implement a ``wrapper'' around the oracle which memorizes all the calls made to the oracle so far and whenever it receives a duplicate call, it replies from its memory without ``actually'' making a call to the oracle. This allows us to simplify the description of our algorithm.}

The following observation is immediate from standard sorting algorithms like merge sort~\cite{cormen2009introduction}.

\begin{observation}\label{obs:naive}
	There is a \PE algorithm for eliciting one preference with query complexity $\BigO(m\log m)$.
\end{observation}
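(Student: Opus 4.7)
The plan is to reduce preference elicitation for a single voter to comparison-based sorting. A preference $\succ_\ell$ is, by definition, a linear (complete, transitive, anti-symmetric) order over the candidate set $\CC$, so eliciting it is equivalent to determining the unique permutation of $\CC$ that is consistent with $\succ_\ell$. The oracle $\Query(x \succ_\ell y)$ provides exactly the primitive that a comparison-based sorting algorithm needs: it returns, in one call, which of two candidates $x,y \in \CC$ is ranked higher by voter $\ell$.

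First, I would pick any comparison-based sorting algorithm whose worst-case comparison count is $\BigO(m \log m)$ on inputs of size $m$; merge sort is the canonical choice, but heap sort would work equally well. Then I would run this algorithm on the list of candidates $\CC = \{c_1, \ldots, c_m\}$, implementing the ``is $x$ greater than $y$?'' primitive by invoking $\Query(x \succ_\ell y)$. Since $\succ_\ell$ is a linear order, transitivity and totality guarantee that the sorting algorithm's assumptions are met and that the algorithm terminates with the unique ordering of $\CC$ induced by $\succ_\ell$, which is exactly $\succ_\ell$ itself.

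For the query complexity bound, I would simply observe that the number of Query calls made is at most the number of comparisons performed by the underlying sorting algorithm, which is $\BigO(m \log m)$ in the worst case for merge sort. Even if the sorting algorithm happens to repeat a comparison on the same pair $(x,y)$, such repetitions do not increase the query complexity as defined in the paper, since the complexity counts only distinct triples $(\ell, x, y)$. Thus the total query complexity for eliciting the single preference $\succ_\ell$ is $\BigO(m \log m)$.

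There is no real obstacle here; the observation is essentially a restatement of the classical sorting complexity in the language of the $\Query$ oracle, and the only thing to verify is that the oracle faithfully implements a comparison primitive for a linear order, which follows directly from the definition of $\Query$ and the fact that each $\succ_\ell$ is a complete order.
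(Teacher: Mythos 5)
Your proof is correct and matches the paper's approach exactly: the paper states the observation as immediate from standard comparison-based sorting (merge sort), which is precisely the reduction you carry out, with the additional (correct) care about how repeated comparisons interact with the paper's definition of query complexity.
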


\subsection{Model of Input for \PE} There are two prominent models for accessing the preferences in the literature (see \cite{deycross}). In the {\em random access model}, we are allowed to query any preference at any point of time. Moreover, we are also allowed to interleave the queries to different voters. In the {\em sequential access model}, voters are arriving in a sequential manner one after another to the system. Once a voter \el arrives, we can query her preference as many times as we like and then we ``release'' the voter \el from the system to grab the next voter in the queue. Once the voter \el is released, its preference can never be queried again.

\section{Recognition Algorithm}

In this section, we first present some structural properties of weakly single-crossing profiles on trees and then exploit them to design our polynomial time algorithm for \SCTR. At a high-level, our algorithm for recognizing weakly single-crossing profiles on trees first find a set of preferences which when included in the input preference profile, must be a single-crossing profile on trees if the input preference profile was indeed weakly single-crossing on trees. Towards that, we investigate some structural aspects of single-crossing and weakly single-crossing preference profiles on trees. These structural results will be crucial to decide the set of preferences that we will add to the input preference profile to make it single-crossing on trees (if at all it is possible, that is, if the input preference profile is weakly single-crossing on trees). We now present our algorithm in detail.

We begin with bounding the number of distinct preferences in a profile which is single-crossing with respect to some tree. We remark that the exact same bound of \Cref{lem:sizeupperbound} is known for the single-crossing profiles~\cite{DBLP:journals/scw/ElkindFS20}.

\begin{lemma}\label{lem:sizeupperbound}
	If a profile \PP of distinct preferences is single-crossing with respect to a tree \TT, then $|\PP|\le {m\choose 2} + 1.$
\end{lemma}

\begin{proof}
	Let $e=(u,v)\in\TT$ be any edge of $\TT=(\VV,\EE)$. Since the preferences in \PP are pairwise distinct, the set $\DD_e = \{\{x,y\}\in\PPP_2(\CC): {\suc_u} \text{ and } {\suc_v} \text{ order $x$ and $y$ differently}\}$ of pairs of candidates that are ordered differently by the voters $u$ and $v$ is nonempty. Since the profile \PP is single-crossing with respect to \TT, for any two distinct edges $e_1, e_2\in\TT$, we have $\DD_{e_1}\cap\DD_{e_2}=\emptyset$. We also have $\cup_{e\in\TT}\DD_e \subseteq \PPP_2(\CC)$. Now we bound $|\PP|$ as follows.
	\[ |\PP| = |\EE|+1 \le {m\choose 2} + 1 \]
\end{proof}

We note that the set of weakly single-crossing profiles on trees is a strict superset of single-crossing profiles on trees, which in turn a strict superset of single-crossing profiles. Hence, \Cref{lem:sizeupperbound} does not follow from the known $\left({m\choose 2} + 1\right)$ upper bound for single-crossing profiles~\cite{deycross}. The fact that the bound turns out to be same is a mere coincidence.

The following result which is immediate from the proof of Lemma 3.7 in \cite{clearwater2015single}, simplifies a lot of our proofs.

\begin{lemma}\label{prop:distinct}
	Let $\PP = (\suc_i^{\el_i})_{i\in[n]}, \el_i>0, \forall i\in[n],$ be a profile with $\suc_i \ne \suc_j, \forall i\ne j$ and $\PP^\pr = (\suc_i)_{i\in[n]}$ be the profile resulting from \PP after removing all the duplicate preferences. Then \PP is single-crossing with respect to some tree if and only if $\PP^\pr$ is single-crossing with respect to some tree. Therefore, \PP is a weakly single-crossing profile on trees if and only if $\PP^\pr$ is a weakly single-crossing profile on trees. Moreover, given the tree with respect to which the profile \PP is single-crossing, we can construct another tree with respect to which $\PP^\pr$ is single-crossing in polynomial amount of time and vice versa.
\end{lemma}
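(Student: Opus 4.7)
The plan is to prove both implications by showing that a single-crossing tree for one profile can be obtained from a single-crossing tree for the other by contracting (resp.\ expanding) nodes that carry the same preference, while maintaining single crossing along every path. The crucial structural ingredient will be the following observation: in \emph{any} tree \TT with respect to which a profile is single crossing, the set of nodes carrying a fixed preference \suc induces a connected subtree of \TT. To see this, suppose $u$ and $v$ both carry \suc; for every pair of candidates $x, y \in \CC$ both $\suc_u$ and $\suc_v$ order $x$ and $y$ the same way, so by single crossing along the unique $u$-$v$ path every intermediate node $w$ must also order $x$ and $y$ that way; quantifying over all pairs $\{x,y\}\in\PPP_2(\CC)$ forces $\suc_w = \suc$ at every intermediate node.

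For the forward direction, starting from a tree \TT witnessing that \PP is single crossing, I would let $S_i$ be the (now provably connected) subtree of nodes of \TT carrying the preference $\suc_i$, and contract each $S_i$ into a single node $u_i$. Since the $\{S_i\}_{i\in[n]}$ partition the node set of \TT and each $S_i$ is connected, the contraction yields a tree $\TT^\pr$ on exactly $n$ nodes. For any path $u_{i_1}, \ldots, u_{i_k}$ in $\TT^\pr$ I would lift it to a path in \TT; the preference sequence along the lifted path, once consecutive duplicates are collapsed, is exactly $\suc_{i_1}, \ldots, \suc_{i_k}$. Single crossing along the lifted path in \TT then transfers to single crossing of $\suc_{i_1}, \ldots, \suc_{i_k}$, so $\PP^\pr$ is single crossing with respect to $\TT^\pr$.

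For the reverse direction, given a tree $\TT^\pr$ witnessing single crossing of $\PP^\pr$, I would attach, for each $i\in[n]$, exactly $\el_i - 1$ new leaves adjacent to $u_i$, each carrying the preference $\suc_i$; the resulting tree \TT is a tree on $\sum_{i\in[n]} \el_i$ nodes that realizes \PP. Any path in \TT arises from a path in $\TT^\pr$ by optionally appending to one or both endpoints a sequence of newly attached leaves, all of which share the preference of that endpoint. Since prepending or appending consecutive duplicates of the same preference cannot introduce any new candidate pair that crosses along the path, single crossing of $\PP^\pr$ along the corresponding path in $\TT^\pr$ propagates to single crossing of \PP along the path in \TT. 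Both contraction and expansion are trivially computable in polynomial time, which delivers the ``moreover'' clause.

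The main obstacle is the structural claim that same-preference nodes form a connected subtree of any single-crossing tree; once this is established, the contraction and expansion operations and their verifications of single crossing are essentially mechanical.
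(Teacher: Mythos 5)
Your proof is correct. The paper itself gives no argument for this lemma---it is stated as immediate from the proof of Lemma~3.7 of Clearwater et al.---so you have supplied the standard contraction/expansion argument that the paper leaves implicit. Your key structural claim (that the nodes carrying a fixed preference induce a connected subtree of any single crossing tree, because for each pair $\{x,y\}$ the agreement of the two endpoints propagates to every intermediate node on their path) is exactly what makes the contraction well defined, and both directions check out: contracting the connected same-preference classes of a tree partition yields a tree, a path in the contracted tree lifts to a simple path in \TT visiting those classes in order (a simple path in a tree cannot re-enter a connected subtree it has left), and hanging the $\el_i-1$ duplicate leaves off $u_i$ only prepends or appends repeated copies of an endpoint's preference, which cannot create a second crossing for any candidate pair. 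One cosmetic point: a path can pick up at most one new leaf at each end (the new nodes have degree one), not a ``sequence'' of them, but this does not affect the argument.
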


We next define the majority relation of a set of preferences which helps us formulate an important property of single-crossing domain with respect to trees. We call that property the {\em triad majority} property. This property will in turn help us in defining and finding what we call the {\em single-crossing tree closure} of a weakly single-crossing profile on trees. The notion of a single-crossing tree closure plays a central role in our recognition and elicitation algorithms.

\begin{definition}[Majority Relation]
	Given a profile \PP of $n$ preferences $(\suc_i)_{i\in[n]}\in\LL(\CC)^n,$ we call the relation $\suc\; = \maj(\PP) = \{x \suc y: x, y\in\CC, |\{i\in[n]: x\suc_i y\}| > \nfrac{n}{2}\}$ the majority relation of the profile $\PP.$ If the majority relation $\maj(\PP)$ of a profile \PP turns out to be a linear order, we say that a majority order of $\PP$ exists and we call $\maj(\PP)$ the majority order of $\PP.$
\end{definition}

\begin{property}[Triad Majority Property]\label{property}
	A profile \PP is said to satisfy the triad majority property if, for every three preferences (not necessarily distinct) ${\suc_1},{\suc_2},{\suc_3}\in\PP,$ the majority relation $\maj((\suc_i)_{i\in[3]})$ of ${\suc_1},{\suc_2},{\suc_3}\in\PP,$ is a linear order.% and $\maj((\suc_i)_{i\in[3]})\in\PP$.
\end{property}

For example, the voting profile $(a\suc b\suc c, b\suc c\suc a)$ over the set $\{a,b,c\}$ of candidates satisfies the triad majority property. However, the voting profile $(a\suc b\suc c, b\suc c\suc a, c\suc a\suc b)$ does not satisfy the triad majority property. This is so because the majority relation for any three preferences from the profile $(a\suc b\suc c, b\suc c\suc a)$ is transitive but the majority relation of the three preferences in the profile  $(a\suc b\suc c, b\suc c\suc a, c\suc a\suc b)$ is $a\suc b, b\suc c,$ and $c\suc a$ which is not transitive. We prove next that if a preference profile satisfies the triad majority property, then its majority relation must be transitive.

%We note that if a profile with an odd number of preferences satisfies triad majority property then the majority relation of that profile is transitive. To see this, let $P$ be a preference profile with an odd number of preferences which satisfies the triad majority property but not transitivity of majority relation. We observe that since the number of preferences is an odd integer, no two candidates tie and thus the majority graph is a tournament. Then there exists three candidates, say $a,b,c$ such that a majority of voters prefer $a$ over $b$, $b$ over $c$, and $c$ over $a$ --- this is because a tournament graph has a directed cycle if and only if it has a directed triangle. Then there exists a preference $\succ_1\in P$ such that $a\succ_1 b\succ_1 c$. This is so since more than half of the preferences have $a$ over $b$ and more than half of the preferences have $b$ over $c$. Similarly, we have $\succ_2,\succ_3\in P$ such that $b\succ_2 c\succ_2 a$ and $c\succ_3 a\succ_3 b$. But then the majority relation of $\succ_1,\succ_2,$ and $\succ_3$ is a linear order and thus cannot belong to $P$ contradicting our assumption that $P$ satisfies the triad majority property. Hence, we have proved the following lemma.

\begin{proposition}\label{prop:tran}
	If any preference profile satisfies the triad majority property, then its majority relation is transitive.
\end{proposition}

\begin{proof}
	We will prove the contra-positive of this statement. Let \PP be any profile whose majority relation is not transitive. We will prove that \PP does not satisfy the triad majority property. We first observe that \PP must have at least $3$ preferences since the majority relation of any preference profile with at most $2$ preferences is transitive. We now consider two cases:

	{\bf Case I:} {\it Suppose \PP has an odd number of preferences.} We observe that since the number of preferences is an odd integer, no two candidates tie and thus the majority graph is a tournament. Then there exists three candidates, say $a,b,c$ such that a majority of voters prefer $a$ over $b$, $b$ over $c$, and $c$ over $a$ --- this is because a tournament graph has a directed cycle if and only if it has a directed triangle. Then there exists a preference $\succ_1\in P$ such that $a\succ_1 b\succ_1 c$. This is so since more than half of the preferences have $a$ over $b$ and more than half of the preferences have $b$ over $c$. Similarly, we have $\succ_2,\succ_3\in P$ such that $b\succ_2 c\succ_2 a$ and $c\succ_3 a\succ_3 b$. We observe that the majority relation of $\suc_1, \suc_2,$ and $\suc_3$ is not linear and thus \PP does not satisfy the triad majority property.

	{\bf Case II:} {\it Suppose \PP has an even number of preferences.} Since the majority relation of \PP is not transitive, there exists $k$ different candidates, say $c_1,\ldots,c_k$ such that $c_i$ is preferred over $c_{i+1}$ in a majority of preferences in \PP for every $i\in[k-1]$ and $c_k$ is preferred over $c_1$ in a majority of preferences in \PP. Let $\suc\in\PP$ be an arbitrary preference in \PP. Let us consider the preference profile $\QQ=\PP\setminus\{\suc\}$. Since \PP has an even number of preferences, \QQ has an odd number of preferences. Moreover, since \PP has an even number of preferences and \QQ has all the preferences in \PP except \suc, $c_i$ is preferred over $c_{i+1}$ in a majority of preferences in \QQ for every $i\in[k-1]$ and $c_k$ is preferred over $c_1$ in a majority of preferences in \QQ. Now from case I, we conclude that \QQ and thus \PP do not satisfy the triad majority property.
\end{proof}

We now show that every single-crossing profile on trees, satisfies the triad majority property.

\begin{lemma}\label{lem:necessary}
	Let \PP be a single-crossing profile with respect to a tree \TT. Then \PP satisfies the triad majority property and $\maj((\suc_i)_{i\in[3]})\in\PP$ for every $\suc_1,\suc_2,\suc_3\in\PP$.
\end{lemma}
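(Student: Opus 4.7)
The plan is to exhibit a single preference in $\PP$ that is simultaneously a linear order and equal to the triad majority, thereby establishing \Cref{property} in one stroke. Given three (not necessarily distinct) preferences $\succ_1, \succ_2, \succ_3 \in \PP$, I would fix voters $u_1, u_2, u_3$ in $\TT$ realizing them and let $m = m(u_1, u_2, u_3)$ be the tree median of these three vertices, i.e., the unique vertex of $\TT$ that lies on every pairwise tree path $u_i$--$u_j$; its existence and uniqueness is immediate from the fact that trees are median graphs, as recalled in the preliminaries. The target claim is $\maj((\succ_i)_{i\in[3]}) = \succ_m$, which, since $\succ_m$ is a linear order and $\succ_m \in \PP$, delivers both parts of the triad majority property at once.

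To verify this claim I would fix an arbitrary pair of candidates $x, y \in \CC$ and invoke the equivalent edge-cut characterization of single-crossingness on trees recalled in the excerpt: the cut $(\VV_{x\succ y}, \VV_{y\succ x})$ spans at most one edge of $\TT$. If the cut is empty, every voter, in particular $u_1, u_2, u_3$ and $m$, orders $x$ and $y$ identically, and there is nothing to check. Otherwise the cut consists of a single edge $e$ whose removal splits $\TT$ into exactly two components, one equal to $\VV_{x\succ y}$ and the other to $\VV_{y\succ x}$. The key observation is then that $m$ and at least two of $u_1, u_2, u_3$ must fall into the same component: if two of the three voters, say $u_i$ and $u_j$, lay on the opposite side of $e$ from $m$, the unique tree path joining $u_i$ and $u_j$ would stay inside that opposite component and thus avoid $m$, contradicting the defining property of the median. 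Hence the pairwise majority on $\{x, y\}$ agrees with $\succ_m$ on $\{x, y\}$, and since $x, y$ were arbitrary, $\maj((\succ_i)_{i\in[3]}) = \succ_m$.

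The only step that demands genuine care is the short contradiction above, namely that the tree median must sit on the ``majority side'' of the single crossing edge; beyond this, the argument is a routine unwinding of the one-edge-cut reformulation of tree single-crossingness, and I do not anticipate any further obstacle (the edge case in which two of $\succ_1, \succ_2, \succ_3$ already coincide trivially produces a majority that lies in $\PP$, so I would dispose of it in a line before invoking the median construction).
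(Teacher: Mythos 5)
Your proof is correct and follows essentially the same route as the paper: both identify the median node $m(u_1,u_2,u_3)$ of the three voters and show that its preference coincides with the triad majority. The only cosmetic difference is in the last step --- the paper derives the contradiction directly from the failure of single-crossingness along the path between the two ``majority'' voters, whereas you invoke the equivalent one-edge-cut characterization; the two verifications are interchangeable.
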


\begin{proof}
	For any three preferences $\suc_i, i\in[3],$ let the nodes associated with $\suc_i\in\PP, i\in[3],$ in \TT be $u_{\suc_i}\in\TT, i\in[3]$. Let $u$ be the unique node in \TT that lies in the shortest path between $u_{\suc_1}$ and $u_{\suc_2}, u_{\suc_2}$ and $u_{\suc_3},$ and $u_{\suc_3}$ and $u_{\suc_1}$ in \TT. We claim that $\maj(\suc_1, \suc_2, \suc_3) = \suc_u,$ where $\suc_u$ is the preference associated with the node $u$. Suppose not, then there exist two candidates $x, y\in\CC$ such that $x\suc_u y, y\suc_i x,$ and $y\suc_j x$ for $i,j\in[3], i\ne j.$ But then the sub-profile of \PP along the path between $u_{\suc_i}$ and $u_{\suc_j}$ is not single-crossing with respect to this path. This contradicts our assumption that \PP is single-crossing with respect to the tree \TT.
\end{proof}

We now define the notion of {\em triad majority closure} of a profile which will be used crucially in our algorithm for \SCTR.

\begin{definition}[Triad Majority Closure]\label{def:triadclosure}
	The triad majority closure of any profile \PP is defined to be the inclusion-wise minimal profile $\overline{\PP}$ that satisfies the triad majority property and $\maj((\suc_i)_{i\in[3]})\in\overline{\PP}$ for every $\suc_1,\suc_2,\suc_3\in\PP$.
\end{definition}

The following result shows that a triad majority closure of a sub-profile of a profile that satisfies the triad majority property is unique thereby, establishing the well-definedness of \Cref{def:triadclosure}.

\begin{proposition}\label{prop:unique}
	If a profile \QQ with an odd number of profiles satisfies the triad majority property and $\maj((\suc_i)_{i\in[3]})\in\QQ$ for every $\suc_1,\suc_2,\suc_3\in\QQ$, then the triad majority closure of every sub-profile \PP of \QQ exists, and it is unique.
\end{proposition}

\begin{proof}
	Let $\MM(\PP)$ be the set of profiles which contain \PP and satisfy the triad majority property. We observe that $\QQ\in\MM(\PP)$. We define $\overline{\PP} = \cap_{\PP^\pr\in\MM(\PP)}\PP^\pr$ and claim that $\overline{\PP}$ is the triad majority closure of \PP. For two profiles $\PP_1, \PP_2$ that satisfy the triad majority property and contain \PP, the profile $\PP_1\cap\PP_2$ also satisfies the triad majority property and contains \PP. Hence, $\overline{\PP}$ satisfies the triad majority property and contains \PP; this proves existence. For uniqueness, let us assume, if possible, that two different profiles $\QQ_1$ and $\QQ_2$ are both triad majority closures of \PP. Then $\QQ_1\cap\QQ_2$ is also another triad majority closure of \PP. However, we have $\QQ_1\cap\QQ_2\subsetneq\QQ_1$ which contradicts our assumption that $\QQ_1$ is a triad majority closure of \PP. Hence, the triad majority closure of \PP exists, and it is unique.
\end{proof}

We now show that the triad majority closure of every weakly single-crossing profile on trees is unique.

\begin{lemma}\label{lem:unique}
	The triad majority closure of every weakly single-crossing profile on trees is unique and thus well-defined.
\end{lemma}

\begin{proof}
	Let \PP be any weakly single-crossing profile on trees. Hence, there exists a single crossing profile \QQ on trees such that \PP is a sub-profile of \QQ. From \Cref{lem:necessary}, we conclude that \QQ satisfies the triad majority property and $\maj((\suc_i)_{i\in[3]})\in\QQ$ for every $\suc_1,\suc_2,\suc_3\in\QQ$. Now using \Cref{prop:unique}, we conclude that the triad majority closure of \PP exists, unique, and thus well-defined.
\end{proof}

Similar to the triad majority closure, we next define the {\em single-crossing tree closure}.

\begin{definition}[Single Crossing Tree Closure]\label{def:scclosure}
	Let \PP be a weakly single-crossing profile on trees. The single-crossing tree closure of \PP is defined to be the inclusion-wise minimal profile $\PP^t$ containing \PP which is single-crossing with respect to some tree.
\end{definition}

The following result shows that, for every weakly single-crossing  profile \PP on trees, the single-crossing tree closure $\PP^t$ of \PP exists and is unique, and thus \Cref{def:scclosure} is well-defined. Moreover, $\PP^t$ turns out to be the triad majority closure of \PP.

\begin{theorem}\label{thm:triadclosure}
	Let \PP be a weakly single-crossing profile on trees. Then the triad majority closure $\overline{\PP}$ of \PP is single-crossing on trees. Therefore, the single-crossing tree closure of \PP exists, it is unique, and is equal to its triad majority closure $\overline{\PP}$. Moreover, the single-crossing tree closure of \PP can be computed in time polynomial in the size of $\PP$ and $|\overline{\PP}|\le|\PP|^3$.
\end{theorem}

\begin{proof}
	We can assume, without loss of generality, that the preferences in \PP are pairwise distinct thanks to \Cref{prop:distinct}. Let \RR be a profile of distinct preferences that is single-crossing with respect to a tree \TT such that \PP be a sub-profile of $\RR$ -- such an \RR exists since \PP is a weakly single-crossing profile on trees. Let us consider the subgraph $\TT^\pr = \cup_{\suc,\suc^\pr\in\PP} \ppp_{u_{\suc}, u_{\suc^\pr}}$ of \TT, where, for two nodes $u, v\in\TT$, $\ppp_{u,v}$ denotes the unique path between $u$ and $v$ in \TT. We observe that $\TT^\pr$ is actually a subtree of \TT since it is a connected subgraph of the tree \TT. Hence, the profile of preferences associated with the nodes in $\TT^\pr$ is single-crossing on trees. We also observe that, for every leaf node \lll of $\TT^\pr,$ the preference $\suc_\lll$ corresponding to \lll always belongs to \PP. We iteratively apply the following transformation on $\TT^\pr$ as long as we can: if there exists a node $u$ in $\TT^\pr$ of degree two such that the preference $\suc_u$ associated with $u$ does not belong to $\PP$, then we ``bypass'' $u$, that is, we remove $u$ from $\TT^\pr$ along with the two edges incident on $u$ and add an edge between the two neighbors of $u$. We denote the tree $\TT^\prr$ that results from $\TT^\pr$ after making all the transformations iteratively as long as we can. It follows directly from the definition of single-crossing profiles on trees that if the preference profile corresponding to $\TT^\pr$ is single-crossing on trees, then the preference profile corresponding to $\TT^\prr$ is single-crossing on trees.

	We claim that the triad majority closure $\overline{\PP}$ of $\PP$ is exactly the set \QQ of preferences associated with the nodes of the tree $\TT^\prr$. Clearly \QQ is also single-crossing on trees. We first observe that \QQ satisfies the triad majority property since \QQ is single-crossing with respect to the tree $\TT^\prr$ [see \Cref{lem:necessary}]. Now to show that \QQ is the triad majority closure of \PP, it is enough to show that every preference ${\suc}\in\QQ\setminus\PP$ is the majority order of some three preferences in \PP. Let ${\suc}\in\QQ\setminus\PP$ and $u\in\TT^\prr$ be the node in $\TT^\prr$ whose corresponding preference is \suc. We observe that the degree of $u$ in $\TT^\prr$ is at least $3,$ since otherwise the transformation would have deleted $u.$ We make $\TT^\prr$ rooted at $u$ and call it $\TT^\prr[u]$. Let $u_i, i\in[3],$ be any three neighbors of $u$ in $\TT^\prr.$ Let the subtrees of the rooted tree $\TT^\prr[u]$ rooted at $u_1, u_2,$ and $u_3$ be $\TT_1, \TT_2,$ and $\TT_3$ respectively. We observe that there must exist a node $v_i$ in $\TT_i$ such that the preference $\suc_{i}$ attached to $v_i$ belongs to \PP for every $i\in[3].$ We claim that $\suc$ is the majority order of $(\suc_i)_{i\in[3]}.$ Indeed, otherwise we may assume that there exist two candidates $x,y\in\CC$ such that $x\suc_u y$ and for at least two indices $i,j\in[3], i\ne j,$ we have $x\suc_i y$ and $x\suc_j y$. However, this violates the single-crossing property on the path between $v_2$ and $v_3$. Hence, $\suc$ is the majority order of $(\suc_i)_{i\in[3]}.$ This proves the claim. Hence, \QQ is the triad majority closure of \PP. Also \QQ is single-crossing with respect to the tree $\TT^\prr.$ Hence, by \Cref{lem:necessary}, \QQ is the single-crossing tree closure of \PP. We also observe that every preference in $\overline\PP\setminus\PP$ is a majority of some three preferences in \PP.

	%  Let $u$ be a node in \TT of degree at least $3$ such that $\suc_u\notin\PP^\pr$. Then there exist three preferences $\suc_i\in\PP^\pr, i\in[3],$ such that the three paths $\ppp_i, i\in[3],$ from $u$ to $u_{\suc_i}, i\in[3],$ respectively are pairwise disjoint except for their intersection at $u$; otherwise $u$ would not belong to $\TT^\pr$. Then $\maj(\suc_1, \suc_2, \suc_3) = \suc_u$ and thus $\suc_u\in\overline{\PP^\pr}$. Therefore $\QQ\subseteq\overline{\PP^\pr}$. On the other hand, since the profile \QQ is single-crossing with respect to the tree $\TT^\prr$ (and thus satisfy triad majority) and $\PP^\pr\subseteq \QQ$, we have $\overline{\PP^\pr}\subseteq\QQ$. Hence, we have $\overline{\PP^\pr}=\QQ$. Hence, the triad majority closure of $\PP^\pr$ is single-crossing with respect to the tree $\TT^\prr$.
	%
	%  Let \RR be the a single-crossing closure of $\PP^\pr$. We have $\overline{\PP^\pr}\subseteq\RR$ thanks to \Cref{lem:necessary}. We also have $\RR\subseteq\overline{\PP^\pr}$ from \Cref{def:scclosure}. Hence, $\overline{\PP^\pr}=\RR$ and thus the single-crossing tree closure of $\PP^\pr$ is unique.

	We consider the algorithm \AA that adds \nobreakseq{$\maj(\suc, \suc^\pr,\suc^\prr)$} iteratively to $\QQ$ (which is initialized to the empty set) for all three (not necessarily distinct) preferences $\suc,\suc^\pr,\suc^\prr\in\PP$, eliminates duplicates from \QQ, and outputs $\QQ$ so that every preference in \QQ are distinct. Clearly, we have $|\QQ|\le|\PP|^3$. The correctness of \AA follows from the fact that every preference in $\overline\PP\setminus\PP$ is a majority of some three (not necessarily distinct) preferences in \PP. The algorithm \AA runs in time polynomial in the size of $\overline{\PP}$ and thus polynomial in \PP since $|\overline\PP|\le|\PP|^3$.
\end{proof}

We will use \Cref{thm:triadclosure} in our \PE algorithm for single-crossing profiles on trees in \Cref{thm:scsequnknownub}. We believe that \Cref{thm:triadclosure} is an important structural result for weakly single-crossing profiles on trees which may be useful elsewhere also.

% \Cref{thm:triadclosure} immediately gives us the following characterization of the profiles that belong to the weakly single-crossing domain on trees.
%
% \begin{corollary}\label{cor:subprofile_character}
	%  A profile \PP belongs to the weakly single-crossing domain on trees if and only if \PP satisfies the triad majority property.
	% \end{corollary}

% We remark that the characterization of weakly single-crossing domain on trees in \Cref{cor:subprofile_character} uses minimum number of preferences since any two preferences is single-crossing and the three preferences $a\suc_1 b\suc_1 c, b\suc_2 c\suc_2 a, c\suc_3 a\suc_3 b$ is not single-crossing with respect to any tree thanks to existence of a Condorcet cycle.

The following result on recognizing single-crossing profiles on trees is thanks to \cite{clearwater2015single}.

\begin{theorem}(Theorem 4.2 in~\cite{clearwater2015single})\label{thm:construction}
	Given a profile \PP, there is a polynomial time algorithm for checking whether there exists a tree \TT with respect to which \PP is single-crossing. Moreover, if a single-crossing tree exists, then the algorithm also outputs a tree $\TT^\pr$ with respect to which \PP is single-crossing.
\end{theorem}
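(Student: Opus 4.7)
The plan is to reduce the question to reconstructing a labeled tree from the pairwise ``disagreement distances'' between voters and then verifying that the candidate tree really witnesses single-crossingness. By \Cref{prop:distinct}, I would first collapse duplicate preferences and work with distinct ones; if the number exceeds ${m\choose 2}+1$, reject immediately by \Cref{lem:size_upper_bound}. For each pair of voters $u,v$, compute the disagreement set $D(u,v) = \{\{x,y\}\in\PPP_2(\CC) : u \text{ and } v \text{ order } x, y \text{ oppositely}\}$, which costs $O(m^2)$ per pair and is the primary combinatorial object on which the rest of the algorithm operates.

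The key structural observation driving the algorithm is that if \PP is single crossing with respect to a tree $\TT=(\VV,\EE)$, then each edge $e\in\EE$ is the unique crossing edge for a nonempty set $D(e)\subseteq\PPP_2(\CC)$, the sets $\{D(e)\}_{e\in\EE}$ are pairwise disjoint, and for every $u,v\in\VV$ the set $D(u,v)$ is the \emph{disjoint} union of $D(e)$ over the edges $e$ on the unique $u$-$v$ path in \TT. Consequently $d(u,v) := |D(u,v)|$ is an additive (tree) metric on \VV, so it must satisfy the four-point condition; this can be checked efficiently and serves as a first necessary filter. If it passes, reconstruct the unique weighted tree $\TT_0$ realizing $d$ using a standard tree-metric algorithm. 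Then for each edge $e$ of $\TT_0$, recover the label $D(e)$ either by picking any two voters $u,v$ straddling $e$ and using the betweenness relation $d(v,w) = d(v,u)+d(u,w)$ to read off which pairs ``cross'' at $e$, or by propagating labels via a single BFS from an arbitrary root.

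Finally, verify that $\TT_0$ together with its edge labels really witnesses single-crossingness: for every $\{x,y\}\in\PPP_2(\CC)$, check that $\VV_{x\suc y}$ is a subtree of $\TT_0$ (equivalently, that $\{x,y\}$ appears in $D(e)$ for at most one edge $e$). If so, output $\TT_0$; otherwise reject. The main obstacle will be proving the \emph{completeness} direction: whenever \PP admits any single crossing tree, the disagreement distance $d$ really is a tree metric and the canonical reconstruction yields a valid single crossing tree. This essentially amounts to showing that the cut structure forced on \VV by any hypothetical witness is rigid enough that distinct single crossing trees for the same distinct-preference profile induce the same metric $d$, so that a witness in the abstract always certifies that our mechanical reconstruction also succeeds.
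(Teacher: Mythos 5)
The paper does not actually prove this statement: it is imported verbatim as Theorem~4.2 of Clearwater et al.~\cite{clearwater2015single}, so there is no in-paper proof to compare against. Judged on its own, your argument is essentially correct and constitutes a genuine alternative proof. Its core is the additivity lemma: in a witness tree every pair $\{x,y\}\in\PPP_2(\CC)$ on which voters disagree has exactly one crossing edge $e_{xy}$, so $D(u,v)$ is the disjoint union of the edge labels along the $u$--$v$ path and $d(u,v)=|D(u,v)|$ is an additive tree metric with strictly positive integer edge weights (after collapsing duplicates via \Cref{prop:distinct}). Since every vertex of the witness tree is a labeled voter and every edge weight is at least $1$, the witness is the unique optimal realization of $d$, so the canonical reconstruction recovers it in the \YES case; the final per-pair connectivity check makes the algorithm sound unconditionally. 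This is a different route from the one in Clearwater et al., which rests on the same additivity fact but proceeds by showing that any single crossing tree must be a minimum spanning tree of the complete graph on voters weighted by Kendall--tau distance, then computing an MST and verifying it; the MST route avoids invoking the four-point condition and the theory of optimal tree realizations, while your route makes the rigidity (``distinct witnesses induce the same metric, hence coincide'') completely explicit.

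Two small points need care in a full write-up. First, ``a standard tree-metric algorithm'' usually means leaf-labeled reconstruction (e.g.\ neighbor joining); here voters may sit at internal nodes, so you should reconstruct via the betweenness relation $d(v,w)=d(v,u)+d(u,w)$ (or cite an algorithm for optimal realizations with labeled internal vertices). Second, on a \NO instance the metric may still satisfy the four-point condition and the reconstruction may produce unlabeled Steiner vertices; since \Cref{def:single_crossing_tree} requires the node set of the tree to be exactly the voter set, your verification step must also reject any $\TT_0$ whose vertex set is not the voter set, not merely check that each $\VV_{x\suc y}$ induces a subtree. Neither issue affects correctness in the \YES case, which is where your completeness argument does its work.
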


\Cref{thm:construction,thm:triadclosure} give us a polynomial time algorithm for the \SCSR problem.

\begin{corollary}\label{cor:recognizing}
	The \SCTR problem is in \Pshort.
\end{corollary}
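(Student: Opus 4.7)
The plan is to combine the closure computation from Theorem~\ref{thm:triad_closure} with the single crossing tree recognition algorithm of Theorem~\ref{thm:construction}. Given an input profile \PP, I would first compute a candidate closure $\bar{\PP}$ by initializing $\bar{\PP}:=\PP$ and then, for every (not necessarily distinct) triple of preferences $\suc_1,\suc_2,\suc_3\in\PP$, computing the majority relation $\maj(\suc_1,\suc_2,\suc_3)$ and, whenever this is a linear order, adding it to $\bar{\PP}$. If some triple in \PP fails to have a linear majority order then, by \Cref{lem:necessary}, \PP cannot be weakly single crossing on trees, so the algorithm returns \NO immediately. Otherwise, I would invoke the algorithm of \Cref{thm:construction} on $\bar{\PP}$ and return \YES iff it certifies $\bar{\PP}$ as single crossing with respect to some tree.

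Correctness in the forward direction is essentially handed to us by \Cref{thm:triad_closure}: if \PP is weakly single crossing on trees, then the triad majority closure of \PP exists, equals the single crossing tree closure, is itself single crossing with respect to some tree, and -- crucially, by the explicit description of algorithm \AA given in the proof of \Cref{thm:triad_closure} -- it is precisely the set $\bar{\PP}$ that the procedure above constructs. Hence \Cref{thm:construction} succeeds on $\bar{\PP}$ and the algorithm returns \YES. For the reverse direction, if the final test succeeds then $\bar{\PP}$ is single crossing with respect to some tree, and since $\PP\subseteq\bar{\PP}$, this witnesses that \PP lies in the weakly single crossing domain on trees by definition.

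For the running-time bound, forming $\bar{\PP}$ requires examining $O(n^{3})$ triples, each costing $O(m^{2})$ time to compute the candidate majority; thus the closure step runs in time $O(n^{3}m^{2})$. The final test is polynomial by \Cref{thm:construction}, so the whole procedure is polynomial.

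The only delicate point in the plan is the reliance on the observation (extracted from the proof of \Cref{thm:triad_closure}) that a single pass of triad majorities over triples drawn from the \emph{original} \PP already produces the full triad majority closure, so no iterative re-closure is needed. This is precisely the structural content of the tree $\TT^{\prr}$ in that proof: every preference appearing in the closure but not in \PP corresponds to an internal node of $\TT^{\prr}$ of degree at least three, and is realized as the majority of three preferences from \PP attached to witnesses in three distinct subtrees. I do not anticipate additional obstacles beyond correctly citing this structural fact; all heavy lifting has been absorbed into \Cref{thm:triad_closure,thm:construction}.
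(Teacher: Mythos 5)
Your proposal is correct and follows essentially the same route as the paper: compute the triad majority closure by a single pass over all triples of preferences from \PP (the algorithm \AA described inside the proof of \Cref{thm:triad_closure}), then run the recognition algorithm of \Cref{thm:construction} on the result, with correctness in both directions following from \Cref{thm:triad_closure} and the definition of the weakly single crossing domain. Your explicit early \NO when some triple has a non-transitive majority relation, and your note that no iterative re-closure is needed, are sensible clarifications of details the paper leaves implicit, but they do not change the argument.
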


\begin{proof}
	Let \PP be any profile. We need to compute if \PP is weakly single-crossing. Using \Cref{prop:distinct}, let us assume that the preferences in \PP are pairwise distinct. Our algorithm \AA first tries to construct the triad majority closure $\overline{\PP}$ of \PP using the algorithm in \Cref{thm:triadclosure}. If \AA fails to construct the triad majority closure $\overline{\PP}$ of \PP, then we output \NO. While constructing $\overline{\PP}$ iteratively in \AA, if we have $|\overline{\PP}|>|\PP|^3$ at any point of time, we output \NO. Otherwise, \AA has constructed the triad majority closure $\overline{\PP}$ of \PP and $|\overline{\PP}|\le|\PP|^3$. Our algorithm now checks whether $\overline{\PP}$ is single-crossing with respect to some tree using \Cref{thm:construction}. If the algorithm in \Cref{thm:construction} outputs \YES, then \AA also outputs \YES and the tree returned by the algorithm in \Cref{thm:construction}; otherwise \AA outputs that the profile \PP is not weakly single-crossing on any tree.

	We now prove the correctness of our algorithm. Suppose the input profile \PP is indeed a weakly single crossing profile on trees. Then the triad majority closure $\overline{\PP}$ of \PP exists, $|\overline{\PP}|\le|\PP|^3$, and $\overline{\PP}$ is a single crossing profile on trees. It now follows from the correctness of the algorithm in \Cref{thm:construction} that our algorithm outputs \YES. Now let us assume that \PP is not a weakly single crossing profile on trees. If the triad majority closure $\overline{\PP}$ of \PP does not exist, then our algorithm correctly outputs \NO. If $\overline{\PP}$ exists but $|\overline{\PP}|>|\PP|^3$, then also our algorithm correctly outputs \NO. Now, since \PP is not a weakly single crossing profile on trees, there cannot exist another profile \QQ such that \PP is a sub-profile of \QQ and \QQ is single crossing on trees. Hence, $\overline{\PP}$ is not a single crossing profile on trees. It now follows from the correctness of  the algorithm in \Cref{thm:construction} that our algorithm outputs \NO. The polynomial running time of \AA follows from the polynomial running time of the algorithms in \Cref{thm:triadclosure,thm:construction}.
\end{proof}

\section{Elicitation Algorithm}

In this section, we present a polynomial time lower than $\OO(mn\log m)$ query complexity algorithm for eliciting weakly single-crossing profiles on trees. At a high level, our algorithm maintains a minimal single-crossing tree containing all the preferences that we have elicited so far. To elicit a new preference, we first search whether this new preference is present in the single-crossing tree that we have been maintaining with a small number of queries. If it is not present, then we elicit the new preference using $\OO(m\log m)$ queries by any efficient sorting algorithm, Heap sort for example. The crucial structural result that we will show is that any single-crossing preference profile cannot have too many different preferences. This helps us to bound the number of times we elicit a new preference using $\OO(m\log m)$ queries. We now present our algorithm in detail.

We begin with bounding the degree of any node in a single-crossing tree for a profile consisting of distinct preferences. We use this to bound the query complexity of our elicitation algorithm.

\begin{lemma}\label{lem:maxdegree}
	Let a profile $\PP$ of distinct preferences be single-crossing with respect to a tree $\TT$. Then the degree of every node in \TT is at most $m-1$.
\end{lemma}

\longversion{
	\begin{proof}
		Consider any node $v$ in \TT and let the preference $\suc_v$ associated with the node $v$ be $c_1\suc c_2\suc \cdots \suc c_m$. Let the neighbors of the node $v$ in \TT be $u_i, i\in[\el],$ and the preferences associated with them be $\suc_{u_i},i\in[\el],$ respectively. We need to show that $\el\le m-1$. We first observe that, for every $i\in[m-1],$ there exists at most one neighbor $u_j, j\in[\el],$ of the node $v$ such that the preference $\suc_{u_j}$ associated with the node $u_j$ has $c_{i+1} \suc_{u_j} c_i.$ Indeed, otherwise let us assume that there exist two neighbors $u_j$ and $u_{j^\pr}, j,j^\pr\in[\el], j\ne j^\pr,$ of the node $v$ such that the preferences $\suc_{u_j}$ and $\suc_{u_{j^\pr}}$ associated with them both have $c_{i+1} \suc_{u_j} c_i$ and $c_{i+1} \suc_{u_{j^\pr}} c_i.$ Then the sub-profile along the path $u_j, v, u_{j^\pr}$ is not single-crossing which contradicts our assumption that \PP is single-crossing with respect to the tree \TT. Hence, for every $i\in[m-1],$ there exists at most one neighbor $u_j, j\in[\el],$ of the node $v$ such that the preference $\suc_{u_j}$ associated with the node $u_j$ has $c_{i+1} \suc_{u_j} c_i.$ We also observe that, for every $j\in[\el],$ there exists an $i\in[m-1]$ such that $c_{i+1} \suc_{u_j} c_i$ since otherwise we have $\suc_v = \suc_{u_j}$ which contradicts our assumption that the preferences of the profile \PP are pairwise distinct. Hence, we have $\el\le m-1$ since otherwise we will have $i\in[m-1]$ and $j,j^\pr\in[\el], j\ne j^\pr,$ such that we have both $c_{i+1} \suc_{u_j} c_i$ and $c_{i+1} \suc_{u_{j^\pr}} c_i$ thanks to pigeon-hole argument which leads to a contradiction as argued above.
	\end{proof}
}

We remark that the assumption that the profile in \Cref{lem:maxdegree} consists of distinct preferences is crucial since a profile where all the preferences are the same is single-crossing with respect to every tree. We now show next a structural result which will be crucial for bounding query complexity of our elicitation algorithm.

\begin{lemma}\label{lem:balancednode}
	Let \TT be a tree of size at least $3$. Then there exists a node $r$ in \TT of degree at least $2$ such that the following holds: make the tree \TT rooted at $r$; let the neighbors of $r$ be $u_1, \ldots, u_\el$; let the subtrees of $\TT[r]$ rooted at $u_i, i\in[\el],$ be respectively $\TT_i, i\in[\el],$ and $|\TT_1|\ge |\TT_2|\ge \cdots \ge |\TT_\el|$; then $|\TT_1| \le 3|\TT|/4$.
\end{lemma}

\longversion{
	\begin{proof}
		Let $u_1$ be any arbitrary node in \TT. If $u_1$ satisfies the properties of the lemma, then we are done. Otherwise, let $u_2$ be the neighbor of $u_1$ such that, if we remove the edge $(u_1, u_2)$, then the connected component containing $u_2$ has more than $3|\TT|/4$ nodes. Let the removal of the edge $(u_1, u_2)$ creates two connected components $\VV_1$ and $\WW_1$ such that $u_1\in\VV_1$. If $u_2$ satisfies the properties of the lemma, then we are done. Otherwise, we repeat the above process defining $u_3$ and $\VV_2$. We observe that $\VV_1\subsetneq\VV_2$. We continue this process until we get a node that satisfies the properties of the lemma. The process has to terminate because otherwise we have an infinite chain $\VV_1\subsetneq\VV_2\subsetneq\VV_3\subsetneq\cdots\subseteq\TT$. This cannot happen since all the inclusions in the chain above are proper and $\VV_i\subseteq\TT$ for every $i$. Hence, the process always terminates with a node satisfying the properties of the lemma.
	\end{proof}
}

Given a profile \PP of distinct preferences, a tree \TT with respect to which \PP is single-crossing, and a preference ${\suc}\in\LL(\CC)$, we now present an algorithm for finding whether \suc belongs to \TT using $\OO(m)$ calls to \Query($\cdot$). This algorithm is a crucial component in our \PE algorithm.

\begin{algorithm}[!t]
	\caption{Searching preference in a single-crossing tree
		\label{alg:search}}
	\begin{algorithmic}[1]
		\Require{A profile $\PP = (\suc_i)_{i\in[n]}$ of distinct preferences, a tree \TT with respect to which \PP is single-crossing, and a preference ${\suc}\in\LL(\CC)$}
		\Ensure{\YES if \suc belongs to \PP and \NO otherwise}
		\While{$|\TT|\ge 3$}\label{alg:searchwhileloop}
		\State{$r \leftarrow$ a node in \TT as in \Cref{lem:balancednode}. $u_i, i\in[\el], \el\ge 2,$ be the neighbors of $r$. Let $\TT_i, i\in[\el],$ be the subtrees of $\TT[r]$ rooted at $u_i, i\in[\el],$ respectively such that $|\TT_1|\ge |\TT_2|\ge \cdots \ge |\TT_\el|$.}
		\For{$i=1$ to \el}\label{alg:searchforloop}
		\State{Let $x_i, y_i\in\CC$ such that $x_i \suc_r y_i$ and $y_i \suc_{u_i} x_i$}
		\If{\Query($x_i \suc y_i$)=\true}
		\State{$\TT\leftarrow\TT\setminus\TT_i$}\label{alg:searchremovesubtree}
		\Else
		\State{$\TT\leftarrow\TT_i$ and exit for loop}\label{alg:searchsubtree}
		\EndIf
		\EndFor
		\EndWhile\label{alg:searchwhileloopend}
		\If{\suc belongs to \TT}\label{alg:ifbegin}
		\Return \YES\Comment{Can be done in $\OO(m)$ queries.}
		\Else \;\Return \NO
		\EndIf\label{alg:ifend}
	\end{algorithmic}\label{alg1}
\end{algorithm}

\begin{lemma}\label{lem:search}
	Let \TT be a tree on which an unknown profile \PP of distinct preferences is single-crossing. Given an oracle access to a preference ${\suc}\in\LL(\CC)$ using \Query($\cdot$), \Cref{alg1} checks if \suc belongs to \PP in polynomial-time using $\OO(m)$ calls to \Query($\cdot$).
\end{lemma}

\begin{proof}
	We present our algorithm in \Cref{alg:search}. Let $r$ be a node as in \Cref{lem:balancednode}. We consider the tree \TT rooted at $r$ which we denote by $\TT[r]$. Let the neighbors of $r$ in $\TT[r]$ be $u_1, \ldots, u_\el$, for some $2\le\el\le m-1$ (the upper bound of $m-1$ follows from \Cref{lem:maxdegree}), the rooted subtrees of $\TT[r]$ rooted at $u_i, i\in[\el],$ be $\TT_i, i\in[\el], |\TT_1|\ge |\TT_2|\ge \cdots \ge |\TT_\el|$, and $|\TT_1| \le 3|\TT|/4$. Since the profile \PP consists of distinct preferences, for every $i\in[\el]$, there exist two candidates $x_i,y_i\in\CC$ such that $x_i\suc_r y_i$ and $y_i\suc_{u_i} x_i$. Let \suc be the preference which we have to search in \TT; we can access \suc only through the \Query($\cdot$) function. We query the oracle for $x_i$ versus $y_i$ in $\suc.$ If $x_i\suc y_i$, then \suc cannot belong to the subtree $\TT_i$, and thus we remove $\TT_i$ from \TT as done in line \ref{alg:searchremovesubtree} of \Cref{alg:search}. On the other hand, if $y_i\suc x_i$, then \suc cannot belong to $\TT\setminus\TT_i$, and thus we remove $\TT\setminus\TT_i$ from \TT as is done in line \ref{alg:searchsubtree} of \Cref{alg:search}. Hence, after each iteration of the while loop, \suc remains in \TT if \suc was present at the start of the iteration. Since every iteration decreases the size of \TT, the algorithm terminates and is correct. We note that a similar argument was used in the proof of \Cref{thm:triadclosure}.

	We now turn our attention to the query complexity of \Cref{alg:search}. For a tree \TT with $n$ nodes where every preference is over $m$ candidates, let $T(n,m)$ be the query complexity of the while loop from line \ref{alg:searchwhileloop} to line \ref{alg:searchwhileloopend} in \Cref{alg:search}. Let us consider an iteration of the while loop at line \ref{alg:searchwhileloop}. Let $k$ be the number of times the for loop at line \ref{alg:searchforloop} iterates. If $k=1$, then the algorithm makes only one query in the current iteration of the while loop in \Cref{alg:search} and the number of nodes in the new tree \TT is upper bounded by $3/4$-th time the number of nodes in the previous \TT (by the choice of $r$). For $k\ge 2$, we observe that the number of nodes in the new tree \TT is upper bounded by $1/k$-th times the number of nodes in the previous \TT -- this is because the for loop at line \ref{alg:searchforloop} iterates according to a nonincreasing order of subtrees. Hence, we have the following recurrence relation.
	\[T(n,m) \le \max\left\{T(3n/4,m)+1, \max_{k=\{2, \ldots, m\}}\left\{T(n/k,m)+k\right\}\right\},\]\[ T(2,m) = T(1,m) = \OO(m)\]
	We know from \Cref{lem:sizeupperbound} that $n\le {m\choose 2} + 1$. By solving the above recurrence with the ${m\choose 2} + 1$ upper bound on $n$ and the fact that line \ref{alg:ifbegin} to \ref{alg:ifend} can be executed with $\OO(m)$ queries (see \cite[Lemma 2]{deycross}), we get that the query complexity of \Cref{alg:search} is $\OO(m)$.
\end{proof}

\begin{algorithm}[!t]
	\caption{Eliciting a profile which is single-crossing with respect to some unknown tree
		\label{alg:elicitation}}
	\begin{algorithmic}[1]
		\Require{$\pi$ be the order in which voters arrive}
		\Ensure{Profile of all the voters}
		\State $\RR, \QQ \leftarrow \emptyset$ \LineComment{\QQ stores all the votes seen so far without duplicate. \RR stores the profile.}
		\For{$i \gets 1 \textrm{ to } n$} \LineComment{Elicit preference of the $i^{th}$ voter in $i^{th}$ iteration of this for loop.}
		\State{\TT be the tree with respect to which the single-crossing tree closure $\overline{\QQ}$ of \QQ is single-crossing}
		\If{$\succ_{\pi(i)}=\suc$ for some ${\suc}\in\overline{\QQ}$}\LineComment{Can be done using $\OO(m)$ queries by \Cref{lem:search}}
		\State $\RR[\pi(i)] \leftarrow \suc$
		\Else
		\State $\RR[\pi(i)] \leftarrow$ Elicit using \Cref{obs:naive}\label{alg:usingnaive}
		\State $\QQ \leftarrow \QQ\cup\{\RR[\pi(i)]\}$
		\EndIf
		\EndFor
	\end{algorithmic}
\end{algorithm}

We first present a \PE algorithm for single-crossing profiles on trees when we are given a sequential access to the preference (the sequential order is a priori not known) and we do not know any tree with respect to which the input profile is single-crossing.

\begin{theorem}\label{thm:scsequnknownub}
	Suppose a profile \PP is single-crossing with respect to some unknown tree \TT. Suppose we have only sequential access to the preferences whose ordering is also not known a priori. Then there is a \PE algorithm for the single-crossing profiles on trees with query complexity $\BigO(mn + \min\{m^2, n\} m\log m)$.
\end{theorem}

\begin{proof}
	We present our \PE algorithm in \Cref{alg:elicitation}. We maintain an array \RR of length $n$ to store the preferences of all the $n$ voters and a set \QQ to store all the distinct preferences seen so far. Let $\pi$ be the order in which the voters are accessed. To elicit the preference of voter $\pi(i)$, we first construct, in polynomial time, a tree \TT with respect to which the single-crossing tree closure $\overline{\QQ}$ of \QQ is single-crossing using \Cref{thm:triadclosure}. Next we find whether the preference of the voter $\pi(i)$ is already present in \TT; this can be done in polynomial time with making $\OO(m)$ calls to $\text{\Query}(\cdot)$ using \Cref{lem:search}. If the preference $\suc_{\pi(i)}$ of the voter $\pi(i)$ is present in \TT, we have elicited $\suc_{\pi(i)}$ using $\OO(m)$ queries. Otherwise, we elicit $\suc_{\pi(i)}$ using $\OO(m\log m)$ queries by \Cref{obs:naive}. However, since the number of distinct preferences in any profile which is single-crossing with respect to some tree is at most ${m\choose 2}+1$ thanks to \Cref{lem:sizeupperbound}, we use \Cref{obs:naive} at most ${m\choose 2}+1$ times. Hence, the query complexity of \Cref{alg:elicitation} is $\BigO(mn + \min\{m^2, n\} m\log m)$.
\end{proof}

Our algorithm in \Cref{thm:scsequnknownub} can readily be seen to work for weakly single-crossing profiles on trees too. Hence, we have the following corollary.

\begin{corollary}\label{cor:elicitsubdomain}
	There exists a polynomial time \PE algorithm for the weakly single-crossing profiles on trees with query complexity $\BigO(mn + \min\{m^2, n\} m\log m)$.
\end{corollary}

We prove next that the query complexity upper bound in \Cref{thm:scsequnknownub} is optimal up to constant factors for a large number of voters (when $n = \Omega(m^2\log m)$), even if a tree \TT with respect to which the input profile \PP is single-crossing, is known and random access to preferences are allowed.

\begin{theorem}\label{thm:scrandomknownlb}
	Let a profile \PP be single-crossing with respect to a tree \TT. Let \TT be known except for the preferences associated with the nodes of \TT. Then any \PE algorithm has query complexity $\Omega(\max\{m\log m, mn\})$ even if we are allowed to query preferences randomly and interleave the queries to different preferences arbitrarily.
\end{theorem}

\begin{proof}
	The $\Omega(m\log m)$ bound follows from the sorting lower bound and the fact that any profile consisting of only one preference $P\in\LL(\CC)$ is single-crossing. Let \TT be a star with one central node and $n-1$ leaf nodes attached to the central node with an edge. Suppose we have an even number of candidates, that is, $\CC = \{c_1, \ldots, c_m\}$ for some even integer $m$. Consider the ordering $\QQ = c_1 \succ c_2 \succ \cdots \succ c_m$ and the pairing of the candidates $\{c_1, c_2\}, \{c_3, c_4\}, \ldots, \{c_{m-1}, c_m\}$. The oracle answers all the query requests consistently according to the ordering $\QQ$. We claim that any \PE algorithm \AA must compare $c_i$ and $c_{i+1}$ for every voter corresponding to every leaf node of \TT and for every odd integer $i\in[m]$. Indeed, otherwise, there exists a voter $\kappa$ corresponding to a leaf node of \TT and an odd integer $i\in[m]$ such that the algorithm \AA does not compare $c_i$ and $c_{i+1}$. Suppose the algorithm outputs a profile $\PP^\pr$. The oracle fixes the preference of every voter except $\kappa$ to be \QQ. If the voter $\kappa$ in $\PP^\pr$ prefers $c_i$ over $c_{i+1}$ in $\PP^\pr$, then the oracle fixes the preference $\suc_\kappa$ to be $c_1 \succ c_2 \succ \cdots \succ c_{i-1} \succ c_{i+1} \succ c_i \succ c_{i+2} \succ \cdots \succ c_m$; otherwise the oracle fixes $\suc_\kappa$ to be \QQ. The algorithm fails to correctly output the preference of the voter $\kappa$ in both the cases. Also the final profile with the oracle is single-crossing with respect to the tree \TT. Hence, \AA must compare $c_i$ and $c_{i+1}$ for every voter corresponding to every leaf node of \TT and for every odd integer $i\in[m]$ and thus has query complexity $\Omega(mn)$.
\end{proof}

Hence, the query complexity of \PE for the weakly single-crossing profiles on trees, does not depend substantially on how the preferences are accessed and whether we know a single-crossing tree of the single-crossing closure of the input profile. This is in sharp contrast to the corresponding results for the single-crossing domain where the query complexity for \PE improves substantially if we know a single-crossing ordering and we have a random access to the preferences than if we only have sequential access to preferences~\cite{deycross}. Next, we present our lower bound on the query complexity of any \PE algorithm for single-crossing domains in the random access model.

\begin{theorem}\label{thm:lb_sc}
	Let a profile \PP be single-crossing with respect to the identity permutation of the voters. Then any algorithm for eliciting \PP has query complexity $\Omega(m^2\log n)$ even if we are allowed to query preferences randomly.
\end{theorem}

\begin{proof}
	Any \PE algorithm can be viewed to be a binary decision tree. Let the set \CC of candidates be $\{c_i: i\in[m]\}$. Let \QQ be any single-crossing profile consisting of $\el=m(m-1)/2$ distinct preferences; we know such a profile exists. Let \RR be the collection of all profiles of size $n$ which contains multiple copies of every preference present in \QQ and nothing else. Then we have $|\RR|=\el^n$. Since every profile in \RR can be a valid input to any \PE algorithm \AA, the corresponding binary decision tree must have at least $\el^n$ leaf vertices. Hence, the query complexity of \AA is the height of the corresponding binary decision tree which is at least $\Omega(m^2\log n)$.
\end{proof}

\section{Conclusion and Future Work}

We have shown that weakly single-crossing profiles on trees inherit many desirable properties from single-crossing profiles on trees like transitivity of the majority relation [\Cref{prop:tran}]. This implies that we can compute the winner under many intractable voting rules like Kemeny, Dodgson, etc., in polynomial time for weakly single-crossing profiles. We have also presented an efficient algorithm to recognize weakly single-crossing profiles. We then showed how weakly single-crossing profiles on trees can be elicited with a small number of queries to the voters. Last but not the least, we have resolved an open question in~\cite{deycross} and thereby proving optimality of their preference elicitation algorithm when random queries are allowed.

An immediate future work is to extend our recognition and elicitation algorithms to the more general median graphs. Characterizing weakly single-crossing profiles on trees in terms of forbidden structures is another important direction of research.

\section*{Acknowledgements}

Palash Dey thanks the Science, Education, and Research Board (SERB), Government of India, for supporting this work through Core Research Grant under file no. CRG/2022/003294. He also thanks the invaluable feedback received from the reviewers of Theoretical Computer Science.

\bibliographystyle{alpha}
\bibliography{references}

\end{document}